\def\uudot{\dot{u}}
\def\3nab{\tilde{\nabla}}
\def\la {\langle}
\def\ra {\rangle}
\def\be {\begin{equation}}
\def\ee {\end{equation}}
\def\ba {\begin{eqnarray}}
\def\ea {\end{eqnarray}}
\newtheorem{prop}{Proposition}
\newtheorem{cor}{Corollary}
\newcommand{\bra}[1]{\left(#1\right)}
\newcommand{\bras}[1]{\left[#1\right]}
\newcommand{\brac}[1]{\left\{#1\right\}}
\newcommand{\sfr}[2]{{\textstyle\frac{#1}{#2}}}
\newcommand{\E}{{\mathcal E}}
\renewcommand{\H}{{\mathcal H}}
\newcommand{\barray}{\begin{array}}
\newcommand{\earray}{\end{array}}
\newcommand{\e}{e}
\newcommand{\N}{N}
 \newcommand{\nab}{\nabla}
\newcommand \ep {\epsilon}
\newcommand \om {\omega}
\newcommand{\udot}{{\mathcal A}}
\begin{document}
\title{Constructing black hole entropy from gravitational collapse}

\author{Giovanni Acquaviva}
 \email{acquavivag@unizulu.ac.za}
 \affiliation{Department of Mathematical Sciences, University of Zululand, Private Bag X1001, Kwa-Dlangezwa 3886, South Africa.}
\author{George F. R. Ellis}
 \email{george.ellis@uct.ac.za}
 \affiliation{Department of Mathematics and Applied Mathematics and ACGC, University of Cape Town,
Cape Town, Western Cape, South Africa.}
\author{Rituparno Goswami}
 \email{goswami@ukzn.ac.za}
\author{Aymen I. M. Hamid}
 \altaffiliation{Physics Department, University of Khartoum, Sudan.}
 \email{aymanimh@gmail.com}
 \affiliation{Astrophysics and Cosmology Research Unit, School of Mathematics, Statistics and Computer Science, University of KwaZulu-Natal, Private Bag X54001, Durban 4000, South Africa.}

\begin{abstract}
Based on a recent proposal for the gravitational entropy of free gravitational fields, we investigate the thermodynamic properties of black hole formation through gravitational collapse in the framework of the semitetrad 1+1+2 covariant formalism.  In the simplest case of an Oppenheimer-Snyder-Datt collapse we prove that the change in gravitational entropy outside a collapsing body is related to the variation of the surface area of the body itself, even before the formation of horizons.  As a result, we are able to relate the Bekenstein-Hawking entropy of the black hole endstate to the variation of the vacuum gravitational entropy outside the collapsing body.
\end{abstract}

\pacs{04.20.Cv, 04.20.Dw}

\maketitle
\section{Introduction}
Black hole entropy in the case of eternal black holes (the maximally extended Schwarzschild vacuum solution) is a very well understood subject since the pioneering work of Bekenstein \cite{Bek73} and of Bardeen, Carter, and Hawking \cite{BarCarHaw73,Haw75}, see \cite{Pag05}  for a review. However astrophysical black holes form in a dynamic way. Entropy is not so well understood in that context. \\

In the context of astrophysical formation of  black holes, a key question arises.  We know that astrophysical black holes are not eternal in the past:  they are created by the continual gravitational collapse of massive stars. Therefore the question is,
\begin{quote}
\textsc{Question:} {\em Should black hole entropy be only a property of the black hole event  horizon, manifesting suddenly as the horizon forms, or should  it be an artefact of a time varying gravitational field due to gravitational collapse, 
with gravitational entropy changing smoothly from initial values to the canonical value $S_{BH} = A/4$ as an event horizon comes into being when the stellar surface area $r$ crosses the value $r = 2M$?}
\end{quote}

The difficulty  in working this out is that we need a definition of gravitational entropy for a generic gravitational field, not only for a black hole; but until recently we have not had such a definition. Penrose \cite{Pen79} has suggested such a definition should be based in properties of the Weyl tensor but gave no specific formula. The important idea behind this proposal is as follows: we know that for any general relativistic 
spacetime, all the information about the spacetime curvature, and hence the gravitational field, is encoded in the Riemann curvature tensor\cite{HE} . However the trace of this tensor, namely the Ricci tensor, is related pointwise to the energy momentum tensor of the matter fields via the Einstein field equations. Hence the information on entropy encoded in the Ricci tensor is the same as the entropy of the matter fields. Therefore, to characterise the entropy of the free gravitational field (or the pure gravity) apart from that encoded in the matter terms, one must use the Weyl tensor, which is the trace free part of the Riemann curvature tensor \cite{HE,EllisCovariant,Ellisbook}.\\

Recently a thermodynamically motivated measure of gravitational entropy based on this idea was proposed by Clifton et. al. \cite{Clifton:2013dha}. A strong candidate for providing a measure of the gravitational entropy of an arbitrary gravitational field is the Bel-Robinson tensor \cite{Bel-Robinson}, which is constructed from  the Weyl tensor and it's dual. It has been shown in \cite{Maartens:1997fg} that this tensor is the unique Maxwellian tensor that can be constructed from the Weyl tensor and acts like an effective energy-momentum tensor of the free gravitational field, albeit having the dimension $L^{-4}$ rather than $L^{-2}$. The proposed measure of gravitational entropy in \cite{Clifton:2013dha}, therefore, uses the square root of the Bel-Robinson tensor, which was shown to be unique for spacetimes which are of Petrov type D or N. \\

This measure of gravitational entropy for free gravitational field has all the important requirements that a measure of entropy should have. It is strictly non-negative and vanishes only for conformally flat spacetimes where the Weyl tensor is zero. It measures the local anisotropy of the free gravitational field and increases monotonically as structures forms in the early universe. Most importantly, this measure reproduces the Bekenstein-Hawking entropy of a black hole, which is the famous theorem that states that the black hole entropy at any time slice is proportional to the surface area of the black hole, which is the 2-dimensional intersection of the black hole horizon and the constant time slice \cite{Bek73,Haw75,BarCarHaw73}. Through this definition of black hole entropy, one can naturally develop the concepts of black hole thermodynamics in both classical and semiclassical regimes, leading to quantum particle creations and Hawking radiation\cite{Gibbons:1977mu}. \\

To investigate the question stated above, in the light of the gravitational entropy proposal of \cite{Clifton:2013dha}, in this paper we consider the simplest example of black hole formation by Oppenheimer-Snyder-Datt \cite{osd,datt} collapse, which describes the gravitational collapse of a spherical dustlike star immersed in a Schwarzschild vacuum. Since the exterior of the star is of Petrov type D, we can uniquely determine the entropy of the free gravitational field for a static observer even when no event horizon exists. We explicitly prove that the Bekenstein-Hawking entropy of the black hole, which is formed after an infinite time for the static observer, can be linked to the net monotonic increase in the entropy of the free gravitational field during this dynamic gravitational  collapse. This result 
relates the time varying gravitational field during the continuous gravitational collapse 
to the thermodynamic property of the final state, the black hole, where gravitational entropy is well understood \cite{Pag05}. \\

In this paper, we work on spherically symmetric black holes and use the semitetrad 1+1+2 covariant formalism for a slightly more general class of Locally Rotationally Symmetric Class (LRS) II spacetimes \cite{EllisLRS} (of which spherical symmetry is a subclass). We discuss this covariant formalism and it's usefulness in describing LRS-II spacetimes  briefly in the next two sections. We then recast the equations of the gravitational entropy in this formalism in section IV and finally prove the proposition relating the net increase in entropy to the change in the surface area of the collapsing star in section V.\\

Unless otherwise specified, we use natural units ($c=8\pi G=1$) throughout this paper, Latin indices run from 0 to 3. 
The symbol $\nabla$ represents the usual covariant derivative and $\partial$ corresponds to partial differentiation. 
We use the $(-,+,+,+)$ signature and the Riemann tensor is defined by
\begin{equation}
R^{a}{}_{bcd}=\Gamma^a{}_{bd,c}-\Gamma^a{}_{bc,d}+ \Gamma^e{}_{bd}\Gamma^a{}_{ce}-\Gamma^e{}_{bc}\Gamma^a{}_{de}\;,
\end{equation}
 The Ricci tensor is obtained by contracting the {\em first} and the {\em third} indices
\begin{equation}\label{Ricci}
R_{ab}=g^{cd}R_{cadb}\;.
\end{equation}
The Hilbert--Einstein action in the presence of matter is given by
\begin{equation}
{\cal S}=\frac12\int d^4x \sqrt{-g}\left[R-2\Lambda-2{\cal L}_m \right]\;,
\end{equation}
variation of which gives the Einstein's field equations as
\be
G_{ab}+\Lambda g_{ab}=T_{ab}\;.
\ee

\section{Semi-Tetrad covariant formalisms}
Spacetimes can be described using tetrad formalisms or metric (or coordinate) based approaches \cite{Ellisbook}. The key idea behind the semitetrad formalisms is to identify preferred directions in spacetime and to project all the geometrical quantities describing the spacetime along these preferred directions and onto the spaces perpendicular to them. Among the most used semitetrad methods are 3+1 ADM formalism (which uses a global foliation of the spacetime and hence the spacetime has to be globally hyperbolic), and the 1+3 and 1+1+2 covariant formalisms (that use a local decomposition and hence there is no constraint on the global structure). Below we briefly describe the last two formalisms

\subsection{1+3 Covariant formalism}

The 1+3 formalism developed by Ehlers, Kristian and Sachs, and Tr\"{u}mper, and summarised by Ellis \cite{EllisCovariant}\cite{Ellisbook}, is based on a local decomposition of the spacetime manifold by choosing a preferred timelike vector: all vectors and tensors are projected either along that timelike direction or on the 3-space perpendicular to it.
We define a time-like congruence with a unit tangent vector $u^a$. The natural
choice of such vector is the tangent to the matter flow lines. Any
vector $X^a$ in the manifold can then be projected on the perpendicular 3-space by the projection
tensor $h^a{}_b=g^a{}_b+u^au_b$.  We can similarly decompose the full covariant derivative of any tensorial quantity in two parts. The dot derivative ($u^a\nabla_a$) is the derivative along the timelike vector $u^a$, and the spacial derivative `$D$' is the projected derivative onto the 3-space, where the projection is done on all indices by the tensor $h^a{}_b$.
The covariant derivative of the time-like vector $u^a$ can now be
decomposed into irreducible parts as 
$\nabla_au_b=-A_au_b+\frac13h_{ab}\Theta+\sigma_{ab}+\ep_{a b
c}\om^c$, where $A_a=\dot{u_a}$ is the acceleration,
$\Theta=D_au^a$ is the expansion scalar, $\sigma_{ab}=D_{\la a}u_{b \ra}$
is the shear tensor and $w^a=\ep^{a b c}D_bu_c$ is the vorticity
vector.\\ 

Similarly the Weyl curvature tensor can be decomposed
irreducibly into the Gravito-Electric and Gravito-Magnetic parts as
$E_{ab}=C_{abcd}u^cu^d$ and $H_{ab}=\frac12\ep_{acd}C^{cd}{}_{be}u^e$
These quantities give a covariant description of tidal forces and gravitational
radiation respectively. The energy momentum tensor for a general matter field can 
be also decomposed as 
$T_{ab}=\mu u_au_b+q_au_b+q_bu_a+ph_{ab}+\pi_{ab}$
where $\mu=T_{ab}u^au^b$ is the energy density, $p=(1/3 )h^{ab}T_{ab}$ is the isotropic pressure, $q_a=q_{\la a\ra}=-h^{c}{}_aT_{cd}u^d$ is the heat flux 3-vector and $\pi_{ab}=\pi_{\la ab\ra}$ is the anisotropic stress.

\subsection{1+1+2 Covariant formalism}

As an extension to the 1+3 formalism to spacetimes having a preferred spatial direction, Clarkson and Barrett developed a 1+1+2 formalism which has been applied to spherically symmetric spacetimes \cite{Clarkson:2002jz,Betschart:2004uu,Clarkson:2007yp}.
A choice of a second preferred vector along the spatial direction $e^a$ orthogonal to $u^a$ produces another split of the spacetime: this allows any 3-vector to be irreducibly split into a scalar, which is the part of the vector parallel to $\e^a$, and a vector, lying in the 2-surface orthogonal to $\e^a$. The projection tensor $N^a{}_b\equiv h^a{}_b-e^ae_b$ projects the quantities onto these 2-surfaces orthogonal to $u^a$ and $e^a$. We will refer to these 2-sirfaces as {\em sheets}.
We can now introduce two new derivatives for any object $ \psi_{a...b}{}^{c...d} $ as natural result of the new spliting of the 3-space (for detailed discussions see \cite{Clarkson:2007yp}): 
\ba
\hat{\psi}_{a..b}{}^{c..d} &\equiv & e^{f}D_{f}\psi_{a..b}{}^{c..d}~, 
\\
\delta_f\psi_{a..b}{}^{c..d} &\equiv & N_{a}{}^{p}...N_{b}{}^qN_{r}{}^{c}..
N_{s}{}^{d}N_f{}^jD_j\psi_{p..q}{}^{r..s}\;.
\ea 
We can easily see that the {\em hat derivative} is the projection of the spatial derivative $D$ along the preferred spacelike direction and {\em  $\delta$-derivative } is the projection on the 2-sheets.
The 1+3 kinematical and Weyl quantities can be decomposed as follows: setting $\{\theta,\udot,\Omega,\Sigma,{\cal E},{\cal H},\udot^a,\Sigma^a,{\cal
E}^a,{\cal H}^a,\Sigma_{ab},{\cal E}_{ab},{\cal H}_{ab}\}$  \cite{Clarkson:2007yp}, we have
\ba
\uudot^a&=&\udot \e^a+\udot^a,\\
\omega^a&=&\Omega \e^a+\Omega^a,\\
\sigma_{ab}&=&\Sigma\bra{\e_a\e_b-\sfr{1}{2}\N_{ab}}+2\Sigma_{(a}\e_{b)}+\Sigma_{ab},\\
E_{ab}&=&{\cal E}\bra{\e_a\e_b-\sfr{1}{2}\N_{ab}}+2{\cal E}_{(a}\e_{b)}+{\cal E}_{ab},\\
H_{ab}&=&{\cal H}\bra{\e_a\e_b-\sfr{1}{2}\N_{ab}}+2{\cal H}_{(a}\e_{b)}+{\cal
H}_{ab}.
\ea
Similarly we may split the fluid variables $q^a$ and $\pi_{ab}$,
\ba
q^a&=&Q \e^a+Q^a,\\
\pi_{ab}&=&\Pi\bra{\e_a\e_b-\sfr{1}{2}\N_{ab}}+2\Pi_{(a}\e_{b)}+\Pi_{ab}.
\ea

\section{LRS-II spacetimes}
A spacetime manifold $(\mathcal{M},g)$ is called  {\it locally isotropic}, if every point $p\in (\mathcal{M},g)$ has continuous non-trivial isotropy group. When this group 
consists of spatial rotations the spacetime is called {\it locally rotationally symmetric} (LRS) \cite{EllisLRS}.  
The variables that uniquely describe an LRS spacetime are
$\brac{\udot, \Theta,\phi, \xi, \Sigma,\Omega, \E, \H, \mu, p, \Pi, Q }$.  Within the LRS class, the LRS-II admits spherically symmetric solutions, is free of rotation and is described by the variables $\brac{\udot, \Theta,\phi, \Sigma,\E, \mu, p, \Pi, Q }$, since $\Omega$, $ \xi $ and $ \H $ all vanish. These spacetimes include Schwarzschild, Robertson-Walker, Lema\^{i}tre-Tolman-Bondi (LTB), and Kottler spacetimes.\\

The most general form of the metric that describes LRS-II can be written as \cite{Stewart-Ellis}
\ba
   ds^2 &= &-A^{-2}(t,r)\,dt^2+B^2(t,r)\,dr^2 \nonumber\\
   && +C^2(t,r)\,[\,dy^2+D^2(y,k)\,dz^2\,] \;,\label{LRSds}
\ea
where $t$ and $r$ are the affine parameters along the integral curves of $u^a$ and $e^a$ respectively and $k=(1,0,-1)$ specifies the closed, flat or open geometry 
of the 2-sheets respectively. Since we are concentrating on spherically symmetric spacetimes, henceforth we will only consider $k=1$.

\subsection{The field equations}
The field equations which describes the propagation and the evolution of the geometrical covariant variables can now be found using the Ricci identities of the vectors $u^a$ and $e^a$ and the doubly contracted Bianchi identities. These are as follows:  

\textit{Propagation}:
\ba
\hat\phi  &=&-\sfr12\phi^2+\bra{\sfr13\Theta+\Sigma}\bra{\sfr23\Theta-\Sigma}
    \nonumber\\&&-\sfr23\bra{\mu+\Lambda}
    -\E -\sfr12\Pi,\,\label{hatphinl}
\\  
\hat\Sigma-\sfr23\hat\Theta&=&-\sfr32\phi\Sigma-Q\
,\label{Sigthetahat}
 \\  
\hat\E-\sfr13\hat\mu+\sfr12\hat\Pi&=&
    -\sfr32\phi\bra{\E+\sfr12\Pi}
    +\bra{\sfr12\Sigma-\sfr13\Theta}Q.\;\label{hateps}
\ea

\textit{Evolution}:
\ba
   \dot\phi &=& -\bra{\Sigma-\sfr23\Theta}\bra{\udot-\sfr12\phi}
+Q\ , \label{phidot}
\\   
\dot\Sigma-\sfr23\dot\Theta  &=&
-\udot\phi+2\bra{\sfr13\Theta-\sfr12\Sigma}^2\nonumber\\
        &&+\sfr13\bra{\mu+3p-2\Lambda}-\E+\sfr12\Pi\, ,\label{Sigthetadot}
\\  
\dot\E -\sfr13\dot \mu+\sfr12\dot\Pi &=&
    +\bra{\sfr32\Sigma-\Theta}\E
    +\sfr14\bra{\Sigma-\sfr23\Theta}\Pi\nonumber\\
    &&+\sfr12\phi Q-\sfr12\bra{\mu+p}\bra{\Sigma-\sfr23\Theta}\ . \label{edot}
\ea

\textit{Propagation/evolution}:
\ba
   \hat\udot-\dot\Theta&=&-\bra{\udot+\phi}\udot+\sfr13\Theta^2
    +\sfr32\Sigma^2 \nonumber\\
    &&+\sfr12\bra{\mu+3p-2\Lambda}\ ,\label{Raychaudhuri}
\\
    \dot\mu+\hat Q&=&-\Theta\bra{\mu+p}-\bra{\phi+2\udot}Q -
    \sfr32\Sigma\Pi,\,
\\    \label{Qhat}
\dot Q+\hat
p+\hat\Pi&=&-\bra{\sfr32\phi+\udot}\Pi-\bra{\sfr43\Theta+\Sigma} Q\nonumber\\
    &&-\bra{\mu+p}\udot\ .
\ea
The 3-Ricci scalar of the spacelike 3-space orthogonal to $u^a$ can be expressed as
\be
^3R =-2\bras{\hat \phi +\sfr34\phi^2 -K}\;,\label{3sca}
\ee
where $K$ is the Gaussian curvature of the 2-sheet and it is defined by
$^2R_{ab}=KN_{ab}$. In terms of the covariant scalars we can write the Gaussian curvature $K$ as
\be
K = \sfr13\bra{\mu+\Lambda}-\E-\sfr12\Pi +\sfr14\phi^2
-\bra{\sfr13\Theta-\sfr12\Sigma}^2\ . \label{gauss}
\ee
Finally the evolution and propagation equations for the Gaussian curvature $K$ are
\ba
\dot K &=& -\bra{\sfr23\Theta-\Sigma}K\ , \label{Kdot}\\
\hat K &=& -\phi K\ . \label{Khat}
\ea
\subsection{Misner-Sharp mass for spherically symmetric spacetimes}

In this section, we derive the Misner-Sharp \cite{Misner:1964je} mass equation for LRS-II spacetimes in terms of the 1+1+2 kinematical
quantities. This quantity represents the mass inside a 2-sphere of radius $r$ at time $t$ in terms of geometric properties on that sphere.\\

The Einstein equation for the metric (\ref{LRSds}) (assuming $k=1$ for spherical symmetry) gives \cite{Stephani:2003ika};
 \be
{\cal M}_{ms}(r,t)=\frac{1}{2\sqrt{K}}\bra{1-\nab_a C\nab^a C},
\ee
where $C$ represents the area radius of the spherical 2-sheets. In terms of the Gaussian curvature of the 2-sheets 
we obtain
\be
{\cal M}_{ms}(r,t)=\frac{1}{2\sqrt{K}}\bra{1-\frac{1}{4K^3}\nab_a K\nab^a K}.
\ee
Using the 1+1+2 decomposition of the covariant derivative for LRS-II together with (\ref{gauss}),(\ref{Kdot}),(\ref{Khat}) the Misner-Sharp
mass takes the form  
\be
{\cal M}_{ms}(r,t)=\frac{1}{2K^{3/2}}\bra{\frac{1}{3}(\mu+\Lambda)-\E-\frac{1}{2}\Pi}.
\label{mass}
\ee
We can see from the above expression that both matter and the Weyl tensor contributes to the Misner-Sharp mass. Hence even in the absence of matter (as in vacuum Schwarzschild spacetime) we have non-zero gravitational mass sourced by the Weyl curvature.  

\section{Thermodynamics of a gravitational field}
We use a thermodynamically motivated expression of the gravitational entropy measure $S_{grav}$ given in \cite{Clifton:2013dha} and based on the Bel-Robinson tensor \cite{Bel-Robinson}, which has a natural interpretation as super-energy-momentum tensor \cite{Maartens:1997fg} for the gravitational field.  In order to be well defined, the gravitational entropy has to (i) be non-negative; (ii) vanish if and only if $C_{abcd}=0$; (iii) measure the local anisotropy of free gravitational field; (iv) reduce to the Bekenstein-Hawking entropy for a Schwarzschild black hole; (v) increase as structures (inhomogeneities) form in the universe.  All these conditions are met by the definition given in \cite{Clifton:2013dha} in the cases of Coulomb-like or wave-like gravitational fields: in the following we will be interested in the former case.  This definition of gravitational entropy has been explored also in \cite{larena}, along with other proposals, in the context of LTB dust models.\\

Following \cite{Clifton:2013dha}, in order to define a thermodynamically motivated gravitational entropy one has first to assume validity of the second law of thermodynamics for a generic gravitational field, that is
\be\label{2nd}
 T_{grav}dS_{grav}=dU_{grav}+p_{grav}dV>0
\ee
where $T_{grav}$, $U_{grav}$ and $p_{grav}$ represent the effective temperature, internal energy and isotropic pressure of the free gravitational field respectively and where $V$ is the spatial volume.  Another key ingredient is the equation of energy-momentum conservation, which in 1+3 decomposition is
\be\label{conserv}
 (\rho\, v)^{\cdot} + p\, \dot{v} = v\left[ - u_aT^{ab}_{;b} - q^b_{;b}-\dot{u}^a q_a - \sigma_{ab}\pi^{ab} \right]\, .
\ee
where $v$ is a spatial volume element and the dot represents the derivative with respect to time.  By comparison between the right-hand side of eq.(\ref{2nd}) and the left-hand side of eq.(\ref{conserv}), one can define an effective thermodynamic equation
\be\label{2ndb}
 T_{grav}\dot{s}_{grav}=(\mu_{grav}\, v)^{\cdot} + p_{grav}\, \dot{v}
\ee
where $s$ is the entropy density.  The quantities on the right-hand side of eq.(\ref{2ndb}) are calculated through contractions of the effective gravitational energy-momentum tensor, defined as the square-root of the Bel-Robinson tensor.  It was shown in \cite{Clifton:2013dha} that the gravitational pressure vanishes in a Coulomb-like field ($p_{grav}=0$) and the gravitational energy density is given by
\be\label{graven}
8\pi\mu_{grav}=2\alpha \sqrt{\frac{2W}{3}}=\alpha |\E|,
\ee
where $\alpha$ is constant introduced by the definition of the gravitational energy-momentum tensor. Using (\ref{mass}) the gravitational energy density can be expressed as
\be
\mu_{grav}=\frac{\alpha}{8\pi}\left|\bra{\frac{1}{3}(\mu+\Lambda)-2{\cal M}_{ms}(r,t)K^{3/2}-\frac{1}{2}\Pi}\right|.
\label{mug}
\ee
Eq.(\ref{2ndb}) can then be written as \cite{Clifton:2013dha},
\be
T_{grav}\dot{s}_{grav}=(\mu_{grav}v)^.=-v\sigma_{ab}\bra{\Pi^{ab}_{grav}+\frac{4\pi(p+\mu)}{\sqrt{3}\alpha \sqrt{2W}}}.
\ee
This expression in 1+1+2 decomposition for LRS-II reads
\be
T_{grav}\dot{s}_{grav}=(\mu_{grav}v)^.=-v\Sigma\bra{\Pi_{grav}+\frac{8\pi(p+\mu)}{3\alpha |\E|}}.
\label{2dn1+1}
\ee
If we want to obtain the entropy of a specific gravitational configuration, we can write eq.(\ref{2ndb}) with $p_{grav}=0$ as
\begin{equation}\label{dels}
 \delta s_{grav}=\frac{\delta\bra{\mu_{grav} v}}{T_{grav}}.
\end{equation}
and integrate over a spacelike hypersurface.  The last ingredient needed is a definition for the temperature of the gravitational field.

\subsection{Temperature of the gravitational field}
We follow the proposal of \cite{Clifton:2013dha} which, in 1+3 decomposition, is given by
\be
T_{grav}\equiv\frac{|u_{a;b}l^ak^b|}{\pi}=\frac{|\dot{u}_ae^a+\frac{1}{3}\Theta+\sigma_{ab}e^ae^b|}{2\pi},
\label{temp}
\ee
where $l^a=\frac{u^a-e^a}{\sqrt{2}}$, $k^a=\frac{u^a+e^a}{\sqrt{2}}$, and $e^a$ is spacelike unit vector aligned with Weyl principal tetrad
\cite{deSwardt:2010nf}. 
The temperature in eq.(\ref{temp}) can be represented in the 1+1+2 decomposition as
\be
T_{grav}=\frac{|\udot+\frac{1}{3}\Theta+\Sigma|}{2\pi}.
\label{tem1+1}
\ee
We can not directly interpret this definition as a generalisation of Hawking and Unruh \cite{Unruh:1976db} temperatures, which are all tightly related to (and describe features of) horizons: instead we identify it with the temperature of a gravitational field as measured locally at a point of the spacetime, associated with the symmetry 2-sphere through that point.

\subsection{Gravitational entropy and structure formation}

We have already stated before, the square root of Bel-Robinson tensor being the measure of gravitational entropy,  enables structure formation naturally as the entropy increases as the structure (or inhomogeneities) forms in the universe. In this subsection we explicitly show this in terms of the covariant variables and the Misner-Sharp mass. From eq. (\ref{mass}) we get the temporal and spatial evolution of the Misner-Sharp mass as
\ba
\hat{{\cal M}}_{ms}&=&\frac{1}{4K^{3/2}}\bra{\phi(\mu+\Lambda)-\bra{\Sigma-\frac{2}{3}\Theta}Q}\label{mhat}\\
\dot{{\cal M}}_{ms}&=&\frac{1}{4K^{3/2}}\bra{\bra{\frac{2}{3}\Theta-\Sigma}\bra{\Lambda-p}-Q\phi}\label{mdot}.
\ea
Let us, for simplicity, consider the universe is filled with perfect fluid with $Q=\Pi=0$. Then using eq. (\ref{mhat}) in eq. (\ref{mug}) we get,
\be
\mu_{grav}=\frac{\alpha K^{3/2}}{6\pi\phi}\left|\bra{\hat{{\cal M}}_{ms}-\frac32\phi{\cal M}_{ms}}\right|\;.
\label{mug2}
\ee
Now from LRS-II field equations, we can easily see that for a homogeneous distribution of perfect fluid with $\hat{\mu}=\hat{p}=0$ we have $\hat{{\cal M}}_{ms}-\frac32\phi{\cal M}_{ms}=0$ on every constant time slice and hence $ \delta s_{grav}=0$. However as discussed in \cite{Clifton:2013dha}, if we start with an inhomogeneous distribution of collapsing matter (as it happens during structure formation), we have $\hat{{\cal M}}_{ms}-\frac32\phi{\cal M}_{ms}\ne0$. This will then make $d\mu_{grav}>0$ and hence $dS_{grav}>0$. Thus the thermodynamics of free gravity naturally favours structure formation, in contrast with the thermodynamics of standard matter that favours dispersion. In the light of above discussion we can predict that the vacuum gravitational entropy outside a collapsing star (integrated over each constant time slice) will increase with time, favouring the process of continual gravitational collapse. This we prove explicitly in the next section.

\section{Gravitational entropy of the vacuum around a collapsing star}
 
Having all the ingredients at hand, we want to look now at the variation of gravitational entropy outside a body which is collapsing to form a black hole in the simplest scenario, the Oppenheimer-Snyder-Datt \cite{osd, datt} dust collapse model, represented schematically in Fig. \ref{collapse}. The main features of this collapsing scenario is as follows:
\begin{enumerate}
\item The interior of the collapsing star is described by a FLRW spacetime with coordinates ($t,r,\theta,\phi$) matched with the exterior vacuum solution represented by Schwarzschild spacetime with coordinates ($\tau,R,\psi,\Phi$).
\item Though in general the entropy of the spacetime has contributions coming both from the matter and the gravitational field, but in the interior spacetime the matter entropy only contributes, since $\E=0$ for FLRW spacetimes while in the exterior vacuum only the gravitational part of the entropy does not vanish. 
\end{enumerate}
Being interested in the variation of gravitational entropy from the point of view of an external static observer, we choose to integrate eq.(\ref{dels}) over a spacelike hypersurface {\it outside} the collapsing body.  Based on the assumptions stated above, in this scenario we are able to show the following:
\begin{prop}\label{propos1}
 The increase in the instantaneous gravitational entropy outside a collapsing star during a given interval of time is proportional to the change in the surface area of the star during that interval.
\end{prop}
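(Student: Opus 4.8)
The plan is to push the hypersurface integral of eq.~\reff{dels} down to an integral over the \emph{area} of the symmetry $2$-spheres on a static exterior slice and to show that the integrand is an exact differential of that area; the factor $1/4$ then drops out automatically.

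First I fix the kinematics of the exterior region, where the metric is vacuum Schwarzschild, $\mu=p=Q=\Pi=\Lambda=0$, of Petrov type~D (so that $S_{grav}$ is unambiguously defined). For the static observer $u^{a}$ the congruence is rigid, hence $\Theta=\Sigma=0$; the spacetime is LRS-II, hence $\Omega=\xi=\H=0$; and every dot-derivative of a scalar vanishes by staticity, so the surviving covariant scalars are $\udot$, $\phi$ (oriented so that $\phi>0$, pointing outwards) and $\E$. From \reff{mhat}--\reff{mdot} with all matter terms zero, $\hat{{\cal M}}_{ms}=\dot{{\cal M}}_{ms}=0$, so ${\cal M}_{ms}\equiv M$ is constant on the exterior, and \reff{mass} then gives $\E=-2MK^{3/2}<0$. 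Substituting $\Theta=\Sigma=0$ and the vacuum values into \reff{Sigthetadot}, whose left-hand side vanishes for the static observer, gives the relation $\udot\,\phi=-\E$, so $\udot=-\E/\phi>0$.

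Next I assemble the three thermodynamic ingredients in covariant form. By \reff{graven}, $\mu_{grav}=\tfrac{\alpha}{8\pi}|\E|$; by \reff{tem1+1} with $\Theta=\Sigma=0$ and $\udot\phi=-\E$, one has $T_{grav}=|\udot|/2\pi=|\E|/(2\pi\phi)$. The hat derivative $e^{a}D_{a}$ is differentiation with respect to proper radial length $l$, so $\hat K=dK/dl=-\phi K$ from \reff{Khat} gives $dl=-dK/(\phi K)$, while the $2$-sheet is a round sphere of Gaussian curvature $K$ and hence of area $A=4\pi/K$; a thin shell therefore carries $3$-volume $dv=(4\pi/K)\,dl=-4\pi\,dK/(\phi K^{2})$ and gravitational energy $\mu_{grav}\,dv$. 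Substituting, the factors $|\E|$ and $\phi$ cancel and
\be
\frac{\mu_{grav}\,dv}{T_{grav}}=-\frac{\pi\alpha\,dK}{K^{2}}=\pi\alpha\,d(1/K)=\frac{\alpha}{4}\,dA\,,
\ee
an exact differential of the $2$-sphere area, valid for every $R_{\star}>2M$, i.e.\ throughout the collapse and before any horizon exists. Integrating \reff{dels} over the exterior portion of a static-time slice, from the stellar surface (of area $A_{\star}(t)$, the same computed from the interior FLRW side or from the exterior by the matching conditions) out to a fixed outer $2$-sphere of area $A_{\mathrm{out}}$, gives $S_{grav}(t)=\tfrac{\alpha}{4}\bigl(A_{\mathrm{out}}-A_{\star}(t)\bigr)$; over any interval $[t_{1},t_{2}]$ the outer term cancels and $\Delta S_{grav}=-\tfrac{\alpha}{4}\,\Delta A_{\star}$, which is positive because the star contracts --- the instantaneous vacuum gravitational entropy increases in proportion to the change in stellar surface area, as claimed.

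The algebra is short; the real work is conceptual. I expect the main obstacle to be making precise what ``integrate \reff{dels} over a spacelike hypersurface'' means --- arguing that the correct bookkeeping is the shell-by-shell build-up of gravitational energy used above, and that the outer cutoff $A_{\mathrm{out}}$ is immaterial since only the difference $\Delta S_{grav}$ enters --- together with pinning down the signs inside the absolute values ($\phi>0$, $\E<0$ outside) and confirming that $\udot\phi=-\E$ and ${\cal M}_{ms}=\mathrm{const}$ really do follow from the LRS-II field equations in vacuum. Once those are settled, the substantive content is exactly the cancellation displayed above that turns $\mu_{grav}\,dv/T_{grav}$ into $\tfrac{\alpha}{4}\,dA$.
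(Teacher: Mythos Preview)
Your argument is correct and reaches exactly the paper's conclusion $\delta S_{grav}=\tfrac{\alpha}{4}\,\Delta A_{\star}$, but the route is genuinely different. The paper works in explicit Schwarzschild coordinates: it writes $S_{grav}=\pi\alpha\int_{R(\tau_0)}^{\infty}\tfrac{|\E|}{\udot}\,\bar R^{2}(1-2m/\bar R)^{-1/2}\,d\bar R$, substitutes the known coordinate expressions $|\E|=2m/R^{3}$ and $\udot=(m/R^{2})(1-2m/R)^{-1/2}$, obtains the divergent integral $2\pi\alpha\int\bar R\,d\bar R$, and then subtracts two such integrals at different times to get the finite area difference. You instead stay entirely inside the 1+1+2 formalism: you extract the structural identity $\udot\phi=-\E$ directly from the vacuum, static limit of \reff{Sigthetadot}, use $K$ (via \reff{Khat}) as the radial variable, and show that the integrand is the exact differential $\tfrac{\alpha}{4}\,dA$ before integrating. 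Your version makes the cancellation manifest and coordinate-free, keeps the intermediate entropy finite by carrying an explicit outer cutoff $A_{\mathrm{out}}$ that drops out of differences, and would generalize more readily to other static LRS-II vacua; the paper's version is quicker if one already has the Schwarzschild scalars in hand and makes the magnitude of the (divergent) total entropy explicit. Both are sound; yours is arguably closer to the spirit of the covariant machinery the paper sets up.
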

\begin{proof}
Outside the collapsing star the spacetime is Schwarzschild.  Therefore, by using eq.(\ref{graven}) and eq.(\ref{tem1+1}) in eq.(\ref{dels}) and integrating over a 3-volume of the exterior region at fixed time, the total entropy at a given time can be expressed as
\be\label{entropyint}
S_{grav}\equiv\int_{\sigma}\, \delta s_{grav} = \pi \alpha \int_{R(\tau_0)}^{\infty}\frac{|\E|}{\udot}\ \frac{\bar{R}^2}{\sqrt{1-\frac{2m}{\bar{R}}}} d\bar{R},
\ee
where we have used $v=u^a\eta_{abcd}dx^b dx^c dx^d$, and the timelike vector $u^a$ is given by
$u^a=\bra{1/\sqrt{|1-\frac{2m}{R}|},0,0,0}$, $R(\tau_0)$ is the radius of the collapsing star at time $\tau_0$ and $m$ is the total mass of the star.
We know that for Schwarzschild spacetime $\E$ and $\udot$ are given by \cite{Betschart:2004uu}
\be
|\E|=\frac{2m}{R^3},
\ee
\be
\udot=\frac{m}{R^2}\bra{1-\frac{2m}{R}}^{-1/2}.
\ee
Using the last two expressions in eq.(\ref{entropyint}), the gravitational entropy is then given by
\be
S_{grav}=2 \pi \alpha \int_{R(\tau_0)}^{\infty}\bar{R}d\bar{R}\ .
\ee
This is an infinite quantity, (although this can be made finite by using the idea of {\em Finite-Infinity} for a realistic astrophysical star for which spacetime is almost Minkowski at a distance of one light year). However if we calculate the change in the gravitational entropy in a time interval $(\tau-\tau_0)$ we obtain
\be
\delta S_{grav}|_{(\tau-\tau_0)}=\frac{\alpha}{4}\Big(A(\tau_0)-A(\tau)\Big),\label{deltas}
\ee
where $A(\tau)$ is the surface area of the star at any time $\tau>\tau_0$. 
\end{proof}
\begin{figure}[h!]
 \begin{center}
 \includegraphics[width=8.6cm]{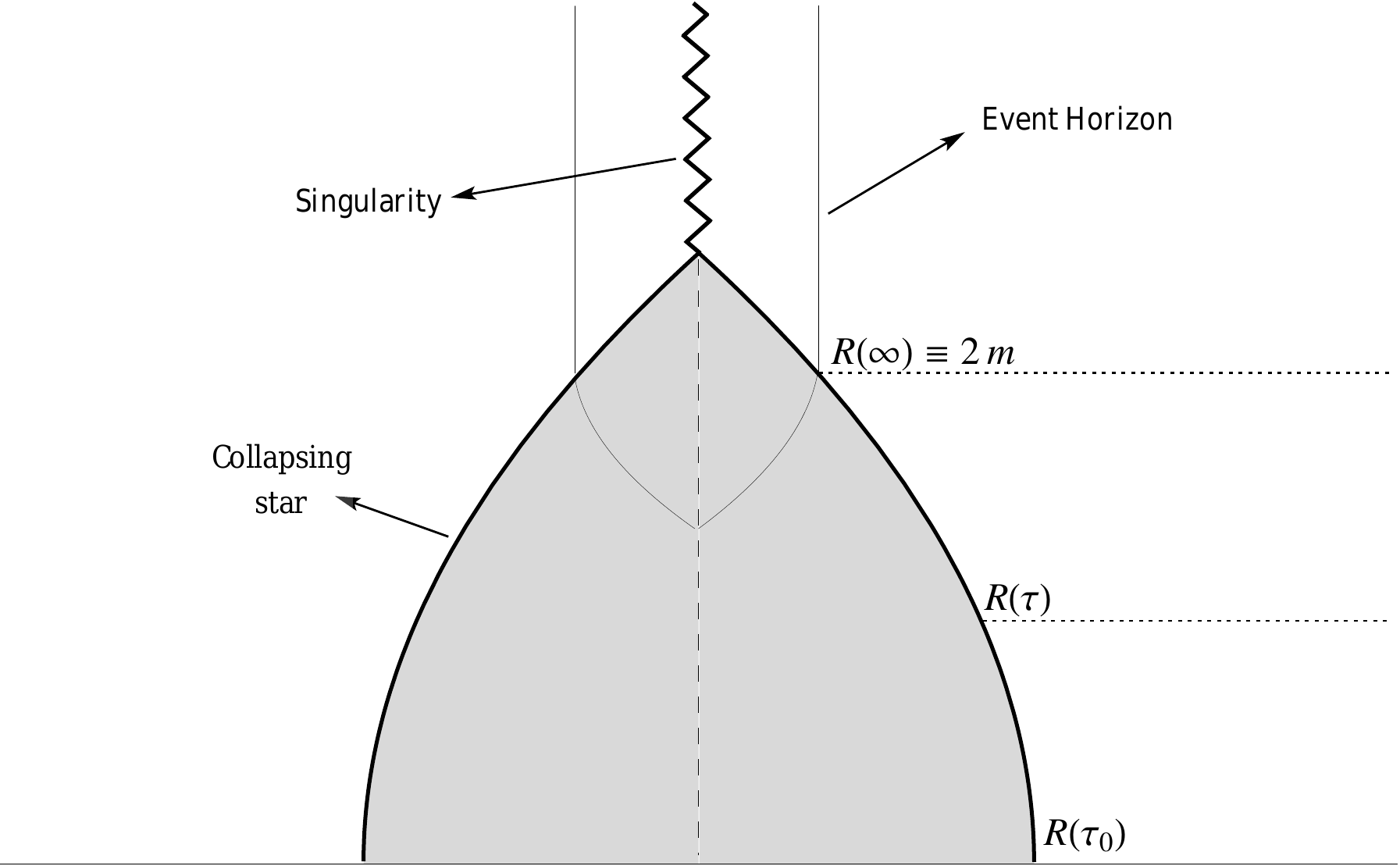}
 \caption{\label{collapse} Oppenheimer-Snyder dust collapse of a star (shaded).  In the reference frame of a static external observer, the crossing of the star's surface with the horizon at radius $2m$ occurs at $\tau\rightarrow\infty.$}
\end{center}
\end{figure}
\subsection{Constraining $\alpha$}
In order to constrain the value of the parameter $\alpha$, we can consider the variation of entropy between a configuration with $R(\tau_{\epsilon})=2m+\epsilon$ (with $\epsilon\ll2m$) and the asymptotic black hole state with $R(\infty)=2m$.  The time elapsed between these spatially neighboring states is actually infinite, because the formation of a black hole as a result of the collapse from the point of view of a static external observer is a process that takes an infinite amount of time.  Eq.(\ref{deltas}) gives
\begin{align}
 \delta S_{grav}|_{(\infty-\tau_{\epsilon})}&=\frac{\alpha}{4}\Big(A(\tau_{\epsilon})-A_H\Big)
\\
 &\simeq\alpha\, 4\pi\, m\, \epsilon\label{asymp}
\end{align}
where $\simeq$ means that we are neglecting higher orders in $\epsilon$.  The energy/mass supplied by this final stage of collapse to form the black hole is $dU\simeq\epsilon/2$, so that eq.(\ref{asymp}) can be rewritten as
\be
 dS\simeq\alpha\, (8\pi m)\ dU
\ee
The term in round brackets is the Hawking temperature $T_H=(8\pi m)^{-1}$ and hence, if $\alpha=1$, we recover the second law of black hole thermodynamics \cite{Bardeen:1973gs}.  The same value for $\alpha$ was found in \cite{Clifton:2013dha} by calculating the instantaneous gravitational entropy of a black hole and comparing the result with the known Bekenstein-Hawking value.

\subsection{Building up the Bekenstein-Hawking entropy}

As a consequence of Proposition \ref{propos1}, it is possible to show that the Bekenstein-Hawking entropy of a Schwarzschild black hole can be related to the process of collapse that leads to its formation, as stated in the following corollary

\begin{cor}
 The Bekenstein-Hawking entropy of a black hole, formed as an endstate of a spherically symmetric collapse of a massive star with Schwarzschild spacetime as the exterior, is the difference between one fourth of the initial area of the collapsing star and the net increase in the vacuum entropy in infinite collapsing time.
\end{cor}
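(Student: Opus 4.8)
The plan is to obtain the statement as an immediate consequence of Proposition~\ref{propos1} evaluated in the limit $\tau\to\infty$, using the value $\alpha=1$ fixed in the preceding subsection. First I would take eq.~(\ref{deltas}) with $\alpha=1$, which reads $\delta S_{grav}|_{(\tau-\tau_0)}=\tfrac14\big(A(\tau_0)-A(\tau)\big)$, and let $\tau\to\infty$. The key observation is that, although $S_{grav}$ computed on any single constant-time slice is infinite (as noted after eq.~(\ref{entropyint})), the \emph{difference} between two slices is finite, so the $\tau\to\infty$ limit of the right-hand side is well defined and equals $\tfrac14\big(A(\tau_0)-A_H\big)$, where $A_H\equiv\lim_{\tau\to\infty}A(\tau)$.

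The second step is to identify $A_H$ with the area of the black-hole horizon. In the Oppenheimer--Snyder--Datt model the stellar surface, as seen by the static external observer, asymptotes to $R=2m$ as $\tau\to\infty$ (Fig.~\ref{collapse}), so $A_H=4\pi(2m)^2=16\pi m^2$, and the Bekenstein--Hawking entropy of the endstate is $S_{BH}=A_H/4=4\pi m^2$. Substituting $A(\infty)=A_H=4S_{BH}$ into the limiting form of eq.~(\ref{deltas}) gives
\be
\delta S_{grav}|_{(\infty-\tau_0)}=\frac14 A(\tau_0)-S_{BH},
\ee
which rearranges to $S_{BH}=\tfrac14 A(\tau_0)-\delta S_{grav}|_{(\infty-\tau_0)}$, i.e.\ precisely the claimed relation: the horizon entropy is one fourth of the initial stellar area minus the net increase of the exterior vacuum gravitational entropy accumulated over the (infinite) collapse time.

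Strictly speaking there is no computational obstacle here, since the corollary is a rearrangement of an already-proven identity; the only point requiring care is the interchange of limit and integration implicit in defining $\delta S_{grav}|_{(\infty-\tau_0)}$ as $\lim_{\tau\to\infty}\delta S_{grav}|_{(\tau-\tau_0)}$. I would therefore make explicit that, because the integrand in eq.~(\ref{entropyint}) differs between two slices only through the finite change in the lower limit $R(\tau)$, the subtraction removes the divergence and the limit exists, so the manipulation is legitimate. With that remark in place the corollary follows directly from Proposition~\ref{propos1}.
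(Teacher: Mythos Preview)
Your proposal is correct and follows essentially the same route as the paper: take eq.~(\ref{deltas}) with $\alpha=1$, send $\tau\to\infty$ so that $A(\tau)\to A_H$, and rearrange using $S_{BH}=A_H/4$. Your added remark on why the $\tau\to\infty$ limit of the entropy \emph{difference} is finite is a welcome clarification that the paper's proof leaves implicit.
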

\begin{proof}
In the reference frame of an external static observer, the crossing between the collapsing star's surface and the horizion (and hence the formation of the black hole) will take an infinite time.  Assuming that for the asymptotic black hole endstate the Bekenstein-Hawking relation $S_{BH}=\frac{1}{4}A_H$ holds, where $A_H$ is the area of the event horizon, then from Proposition \ref{propos1} with $\alpha=1$ we have
\be
S_{BH}=\frac{1}{4}A(\tau_0)- \delta S_{grav}|_{(\infty-\tau_0)}\ .
\ee
\end{proof}
The above result may have important consequences on the holographic principle \cite{susskind,hooft}, which was inspired by black hole thermodynamics. From our result above, we can easily see that the entropy being related to the surface area is not an {\it exclusive} property of horizons (black hole or cosmological), rather this property is common to other 2-surfaces enclosing a 3-volume (such as the boundary of the collapsing star). Hence, this result may expand the scope of applicability of holographic principle, which can be viewed as a manifestation of boundary value problems of the thermodynamical properties of any closed domain.

\section{Putting it all together}
The standard story of gravitational entropy relates only to black holes: it does not show how that entropy behaves as a black hole forms. But black holes form in the context of the expanding universe. The major paradox is that any standard text tells you that the second law of thermodynamics implies that entropy increases, and that in turn is taken to show that disorder increases at microscales while order increases at macro scales \cite{Pen:10}. No structure can form spontaneously. But in fact order does indeed spontaneously form on large scales as the universe expands - an apparent contradiction with the second law \cite{Ellis:95}. In order to resolve this, one needs a good definition of gravitational entropy.\\

    The definition given in \cite{Clifton:2013dha}, where (following Penrose' suggestions) gravitational entropy is based in the properties of the Weyl tensor, resolves this issue as far as the growth of perturbations in the expanding universe, due to gravitational attraction, is concerned (see equations (54) and (55) in \cite{Clifton:2013dha}). The present paper has shown that that initial growth of gravitational entropy, taking place in conjunction with the initial formation of structure in the expanding universe, can be smoothly joined on to the formation of black holes. The famous black hole entropy does not suddenly appear when the event horizon is formed; it grows steadily as gravitational attraction causes ever more concentrated objects to form, eventually leading to the existence of black holes with the standard gravitational entropy.

\begin{acknowledgments}
GA is thankful to the Astrophysics and Cosmology Research Unit (ACRU) at the University of KwaZulu-Natal for the kind hospitality.  AH,  RG and GFRE are supported by National Research Foundation (NRF), South Africa. 
\end{acknowledgments}

\end{document}